\providecommand{\U}[1]{\protect\rule{.1in}{.1in}}
\newtheorem{theorem}{Theorem}
\newtheorem{remark}[theorem]{Remark}
\newenvironment{proof}[1][Proof]{\noindent\textbf{#1.} }{\ \rule{0.5em}{0.5em}}
\begin{document}
\preprint{ }
\title[ ]{Relativistic dynamics, Green function and pseudodifferential operators}
\author{Diego Julio Cirilo-Lombardo}
\affiliation{National Institute of Plasma Physics (INFIP),}
\affiliation{Universidad de Buenos Aires, Consejo Nacional de Investigaciones Cientificas y
Tecnicas (CONICET), Facultad de Ciencias Exactas y Naturales, Department of
Physics, Ciudad Universitaria, Buenos Aires 1428, Argentina}
\affiliation{Bogoliubov Laboratory of Theoretical Physics, Joint Institute for Nuclear
Research, 141980 Dubna, Russia}
\author{}
\affiliation{}
\keywords{}
\pacs{}

\begin{abstract}
The central role played by pseudodifferential operators in relativistic
dynamics is very well know. In this work, operators as the Schrodinger one
(e.g: square root) are treated from the point of view of the non-local
pseudodifferential Green functions. Starting from the explicit construction of
the Green (semigroup) theoretical kernel, a theorem linking the integrability
conditions and their dependence on the spacetime dimensions is given.
Relativistic wave equations with arbitrary spin and the causality problem are
discussed with the algebraic interpretation of the radical operator and their
relation with coherent and squeezed states. Also we perform by mean of pure
theoretical procedures (based in physical concepts and symmetry) the
relativistic position operator which satisfies the conditions of integrability
: it is non-local, Lorentz invariant and does not have the same problems as
the "local"position operator proposed by Newton and Wigner. Physical examples,
as Zitterbewegung and rogue waves, are presented and deeply analysed in this
theoretical framework.

\end{abstract}
\volumeyear{year}
\volumenumber{number}
\issuenumber{number}
\eid{identifier}
\date[Date text]{date}
\received[Received text]{date}

\revised[Revised text]{date}

\accepted[Accepted text]{date}

\published[Published text]{date}

\startpage{101}
\endpage{102}
\maketitle
\tableofcontents

\section{Introduction}

Since long ago, the correct mathematical treatment of various operators that
contain a nonlinear character called the permanent attention of researchers in
both: physics and mathematics. In particular, the natural introduction of the
square root of the fundamental operators (Hamiltonian, Lagrangian) in the
field of relativistic theories has brought not only the problem of
mathematical treatment of these non-local and nonlinear operators, but its
physical interpretation also. The conceptual fact to find the physical
interpretation of the square root operator has not only addressed the
classical dynamics but in particular its action on physical states of the
system, mainly in the quantum case (spectrum). About this issue, the works of
Salpeter and others \cite{salp} , are very well known. From the purely formal
point of view, the mathematical developments has been oriented in the area of
the pseudodifferential operators with the works of \cite{semi} and in the
theory of semigroups, mainly with the research of Yosida\cite{yos1}.

Several different physical scenarios have been shown that the correct
description is made by mean these pseudodifferential operators. For example,
in\cite{seakeeping} a semi-analytical computation of the three dimensional
Green function of a pseudodifferential operator for seakeeping flow problems
is proposed where the potential flow model is assumed with harmonic dependence
on time and a linearized free surface boundary condition. Also in
\cite{carlson}, the pseudodifferential operator is introduced in Carson's
integral computation ( the original expression involves the evaluation of just
one Struve and one Bessel function) and used in power systems analysis for
evaluating the earth-return impedance of overhead conductors above homogeneous earth.

The starting point in this article is to analyze the classical radical
operator from the viewpoint of Green functions connecting these results with
previous investigations in which we have related these type of operators with
the well known problem of localization and proper time. In second place, we
will show the purely algebraic representation giving an interpretation of such
operators. This algebraic interpretation is very important because brings the
possibility to make a link with pseudodifferential operators and semigroup
(Fourier-Integral) representations:
\[%
\begin{array}
[c]{ccccc}
&  &
\begin{array}
[c]{c}%
algebraic\\
interpretation
\end{array}
&  & \\
& \nearrow &  & \nwarrow & \\
pseudodifferential\text{ }operators &  & \longleftrightarrow &  &
\begin{array}
[c]{c}%
semigroup\\
(Fourier-Integral\text{ }representations)
\end{array}
\end{array}
\]
and consequently with the relativistic wave equations of \ arbitrary spin
states (in particular of parastatistical ones). Several examples, from our
earlier references on the subject (e.g. \cite{epjc} \cite{meta}), and new ones
(as the important problem of rogue waves) are analyzed and discussed. Finally,
we discuss other important questions that will be treated in detail somewhere
\cite{diegoN} as the role of the spin, the dispersion in \ time of the
physical states and the Levy processes.

\footnote[1]{It is useful to know that we can adimensionalize our problem
introducing $\tau\equiv\frac{ct}{\lambda_{c}},$ $\eta\equiv\frac{\hbar x}{mc}$
$\left(  \lambda_{c}=\frac{h}{mc}\right)  $ then, the Schrodinger equation
becomes to $\partial_{\tau}\psi\left(  \tau,\eta\right)  =\sqrt{1-\partial
_{\eta}^{2}}\psi\left(  \tau,\eta\right)  .$In other case, we take through
this paper: $\hbar=1=c$}

\section{Green functions and pseudodifferential operators}

The Schrodinger operator is, in general, the main object of study both: in
theoretical physics and mathematics and from the classical to the quantum
point of view. However, from the operator viewpoint, the relativistic form
(e.g. square root ) carry conceptual and technical troubles. As we have been
discussed before\cite{diego1}\cite{diego2}\cite{diego3}, the conceptual
trouble coming from three sources: the meaning of the Lagrangian as
\textit{measure}, the \textit{localization} (one particle/ensemble
interpretation) and the relation with \textit{spin degrees of freedom};
meanwhile the technical one is given by the square root form of the
Lagrangian/Hamiltonian when is treated as \textit{operator}. Consequently, in
this paper our starting point will be the following simple mathematical object
(principal kernel):%
\begin{equation}
\mathbb{L}\equiv\sqrt{\zeta^{2}+m^{2}} \tag{1}%
\end{equation}
that is characteristic inside the fundamental Schrodinger type equation to
solve ($g_{\mu\nu}=\left(  +---\right)  ;\mu,\nu=0,1,2,3)),$ namely%
\begin{equation}
i\partial_{t}\psi\left(  t,x\right)  =\frac{1}{\left(  2\pi\right)  ^{3}}%
\int\int\sqrt{\zeta^{2}+m^{2}}e^{i\zeta\cdot\left(  x-y\right)  }\psi\left(
t,y\right)  d^{3}yd^{3}\zeta\equiv\mathcal{H}\psi\left(  t,x\right)  \tag{2}%
\end{equation}
and having account that the square root operator above is well defined in such
pseudodifferential representation due its strongly elliptness (for instance
$\zeta^{2}\equiv\gamma_{ij}\zeta^{i}\zeta^{j}$ with $i,j=1,2,3$), the solution
for the source which is a delta function becomes:%
\begin{equation}
i\partial_{t}G\left(  t,x\right)  -\frac{1}{\left(  2\pi\right)  ^{3}}\int
\int\sqrt{\zeta^{2}+m^{2}}e^{i\zeta\cdot\left(  x-y\right)  }G\left(
t,y\right)  d^{3}yd^{3}\zeta\equiv\mathcal{-\delta}^{3}\left(  x\right)
\mathcal{\delta}\left(  t\right)  \tag{3}%
\end{equation}
We perform the Fourier transform obtaining%
\begin{equation}
\frac{1}{\left(  2\pi\right)  ^{2}}\int(\omega G\left(  \omega,p\right)
-\sqrt{p^{2}+m^{2}}G\left(  \omega,p\right)  )e^{-i\left(  \omega t-p\cdot
x\right)  }d^{3}pd\omega=-\frac{1}{\left(  2\pi\right)  ^{4}}\int e^{-i\left(
\omega t-p\cdot x\right)  }d^{3}pd\omega\tag{4}%
\end{equation}
consequently we have in the momentum space
\begin{equation}
G\left(  \omega,p\right)  =-\frac{1}{\left(  2\pi\right)  ^{2}}\frac{1}%
{\omega-\sqrt{p^{2}+m^{2}}} \tag{5}%
\end{equation}
Anti-transforming back to $x-$space we obtain%
\begin{equation}
G\left(  t,x\right)  =-\frac{1}{\left(  2\pi\right)  ^{4}}\int\frac
{e^{-i\left(  \omega t-p\cdot x\right)  }}{\omega-\sqrt{p^{2}+m^{2}}}%
d^{3}pd\omega\tag{6}%
\end{equation}
Choosing a path over the complex plane, the integration over $\omega$ can be
performed. Then we replace the denominator by $\omega-\left(  1\mp
i\varepsilon\right)  \sqrt{p^{2}+m^{2}}$ with $\varepsilon>0$ evidently.
Integration along a in the upper half plane gives $G^{+}\left(  t,x\right)  $
and the other one $G^{-}\left(  t,x\right)  :$%
\begin{align}
G^{+}\left(  t,x\right)   &  =\frac{i}{\left(  2\pi\right)  ^{3}}%
\underset{\varepsilon\rightarrow0}{\lim}\theta\left(  t\right)  \int
e^{-i\left(  \left(  1-i\varepsilon\right)  \sqrt{p^{2}+m^{2}}t-p\cdot
x\right)  }d^{3}p\tag{7}\\
G^{-}\left(  t,x\right)   &  =-\frac{i}{\left(  2\pi\right)  ^{3}}%
\underset{\varepsilon\rightarrow0}{\lim}\theta\left(  -t\right)  \int
e^{-i\left(  \left(  1+i\varepsilon\right)  \sqrt{p^{2}+m^{2}}t-p\cdot
x\right)  }d^{3}p \tag{8}%
\end{align}
where $\theta\left(  t\right)  $ is the Theta (step) function. Notice that
$G_{D}=G^{+}\left(  t,x\right)  -G^{-}\left(  t,x\right)  =\frac{i}{\left(
2\pi\right)  ^{3}}\int e^{-i\left(  \sqrt{p^{2}+m^{2}}t-p\cdot x\right)
}d^{3}p.$

Integration with respect to $d^{3}p$ is performed as follows: we write in the
spherical prescription only for simplicity, $p\cdot x=\left\vert p\right\vert
\left\vert x\right\vert \cos\theta$ and integrating with respect to $\varphi$
and $\theta$%
\begin{align}
G^{\pm}\left(  t,x\right)   &  =\pm\frac{i}{\left(  2\pi\right)  ^{3}%
}\underset{\varepsilon\rightarrow0}{\lim}\theta\left(  \pm t\right)  \int
e^{-i\left(  \left(  1\mp i\varepsilon\right)  \sqrt{p^{2}+m^{2}}t-\left\vert
p\right\vert \left\vert x\right\vert \cos\theta\right)  }p^{2}dp\sin\theta
d\theta d\varphi\tag{9}\\
&  =\pm\frac{2i}{\left(  2\pi\right)  ^{2}}\underset{\varepsilon\rightarrow
0}{\lim}\theta\left(  \pm t\right)  \int e^{-it\left(  1\mp i\varepsilon
\right)  \sqrt{p^{2}+m^{2}}}pdp\sin\left(  \left\vert p\right\vert \left\vert
x\right\vert \right) \nonumber\\
&  =\pm\frac{2i}{\left(  2\pi\right)  ^{2}}\underset{\varepsilon\rightarrow
0}{\lim}\theta\left(  \pm t\right)  it\left(  1\mp i\varepsilon\right)
m^{2}\frac{K_{2}\left(  m\sqrt{x^{2}-t^{2}\left(  1\mp i\varepsilon\right)
^{2}}\right)  }{x^{2}-t^{2}\left(  1\mp i\varepsilon\right)  ^{2}} \tag{11}%
\end{align}
where we have been used\cite{GR} (formulas 3.915)

\section{Relation with the operational approach: example in one dimension}

From the fundamental equation to solve ($g_{\mu\nu}=\left(  +---\right)
;\mu,\nu=0,1,2,3))$%
\begin{equation}
i\partial_{t}\psi\left(  t,x\right)  =\frac{1}{\left(  2\pi\right)  ^{3}}%
\int\int\sqrt{\zeta^{2}+m^{2}}e^{i\zeta\cdot\left(  x-y\right)  }\psi\left(
t,y\right)  dyd\zeta\equiv\mathcal{H}\psi\left(  t,x\right)  \tag{12}%
\end{equation}
and having account that the square root operator above is well defined in such
pseudodifferential representation due its strongly elliptness (for instance
$\zeta^{2}\equiv\gamma_{ij}\zeta^{i}\zeta^{j}$ with i,j=1,2,3), the solution
for the source becomes%
\begin{equation}
i\partial_{t}G\left(  t,x\right)  -\frac{1}{\left(  2\pi\right)  ^{3}}\int
\int\sqrt{\zeta^{2}+m^{2}}e^{i\zeta\cdot\left(  x-y\right)  }G\left(
t,y\right)  dyd\zeta\equiv\mathcal{-\delta}\left(  x\right)  \mathcal{\delta
}\left(  t\right)  \tag{13}%
\end{equation}
We perform the Fourier transform obtaining%
\begin{equation}
\int(\omega G\left(  \omega,p\right)  -\sqrt{p^{2}+m^{2}}G\left(
\omega,p\right)  )e^{-i\left(  \omega t-p\cdot x\right)  }dpd\omega=-\frac
{1}{\left(  2\pi\right)  ^{2}}\int e^{-i\left(  \omega t-p\cdot x\right)
}dpd\omega\tag{14}%
\end{equation}
consequently we have in the momentum space
\begin{equation}
G\left(  \omega,p\right)  =-\frac{1}{\omega-\sqrt{p^{2}+m^{2}}} \tag{15}%
\end{equation}
Anti-transforming back to $x-$space we have%
\begin{equation}
G\left(  t,x\right)  =-\frac{1}{\left(  2\pi\right)  ^{2}}\int\frac
{e^{-i\left(  \omega t-p\cdot x\right)  }}{\omega-\sqrt{p^{2}+m^{2}}}%
dpd\omega\tag{16}%
\end{equation}
Again, choosing a path over the complex plane, the integration over $\omega$
can be performed. Then, we replace the denominator by $\omega-\left(  1\mp
i\varepsilon\right)  \sqrt{p^{2}+m^{2}}$ with $\varepsilon>0$ as is evident.
Integration along a in the upper half plane gives $G^{+}\left(  t,x\right)  $
and the other one $G^{-}\left(  t,x\right)  :$%
\begin{align}
G^{+}\left(  t,x\right)   &  =\frac{i}{2\pi}\underset{\varepsilon\rightarrow
0}{\lim}\theta\left(  t\right)  \int e^{-i\left(  \left(  1-i\varepsilon
\right)  \sqrt{p^{2}+m^{2}}t-p\cdot x\right)  }dp\tag{17}\\
G^{-}\left(  t,x\right)   &  =-\frac{i}{2\pi}\underset{\varepsilon
\rightarrow0}{\lim}\theta\left(  -t\right)  \int e^{-i\left(  \left(
1+i\varepsilon\right)  \sqrt{p^{2}+m^{2}}t-p\cdot x\right)  }dp \tag{18}%
\end{align}
where $\theta\left(  t\right)  $ is the Theta (step) function. Notice that:
$G^{+}\left(  t,x\right)  -G^{-}\left(  t,x\right)  =\frac{i}{2\pi}\int
e^{-i\left(  \sqrt{p^{2}+m^{2}}t-p\cdot x\right)  }dp.$

Integration with respect to $dp$ is performed directly obtaining%
\begin{equation}
G^{\pm}\left(  t,x\right)  =\pm\frac{i}{2\pi}\underset{\varepsilon
\rightarrow0}{\lim}\theta\left(  \pm t\right)  it\left(  1\mp i\varepsilon
\right)  m\frac{K_{1}\left(  m\sqrt{x^{2}-t^{2}\left(  1\mp i\varepsilon
\right)  ^{2}}\right)  }{\sqrt{x^{2}-t^{2}\left(  1\mp i\varepsilon\right)
^{2}}} \tag{19}%
\end{equation}
where we have been used\cite{GR} ( formulas 3.914).

Now we will make the proof about the direct relation between the operational
approach and the Green function one.

We know that:%

\begin{align}
LG\left(  t,x\right)   &  \equiv i\partial_{t}G\left(  t,x\right)  -\frac
{1}{\left(  2\pi\right)  ^{3}}\int\int\sqrt{\zeta^{2}+m^{2}}e^{i\zeta
\cdot\left(  x-y\right)  }G\left(  t,y\right)  dyd\zeta\tag{20}\\
&  \equiv\mathcal{-\delta}\left(  x\right)  \mathcal{\delta}\left(  t\right)
\tag{21}%
\end{align}
then, the solution given by the difference of the fundamental Green functions
is solution of the free Salpeter equation with constant initial condition:
\begin{equation}
LG_{D}=L\left(  G^{+}\left(  t,x\right)  -G^{-}\left(  t,x\right)  \right)  =0
\tag{22}%
\end{equation}
multiplication with the general initial condition as ($a$ is some constant):%
\begin{equation}
\Psi_{0}\left(  p\right)  \equiv a\int e^{-i\left(  \sqrt{p^{\prime2}+m^{2}%
}\beta-p^{\prime}\alpha\right)  }\delta\left(  p-p^{\prime}\right)
dp^{\prime} \tag{23}%
\end{equation}
is also solution. It is useful to note that $\Psi_{0}\left(  p\right)  $ acts
as convolution due its reproducing properties (notice the obvious fact that
have the same fashion that the kernel for the square root transformation as in
the Schwartzian case) making a \ general shift in time and space:%

\begin{align}
\left(  G^{+}\left(  t,x\right)  -G^{-}\left(  t,x\right)  \right)  \Psi
_{0}\left(  p\right)   &  =\frac{i}{2\pi}\int e^{-i\left(  \sqrt{p^{2}+m^{2}%
}\left(  t+\beta\right)  -p\left(  x+\alpha\right)  \right)  }dp=0\tag{24}\\
&  =\frac{1}{2\pi}\left(  t+\beta\right)  m\frac{K_{1}\left(  m\sqrt{\left(
x+\alpha\right)  ^{2}-\left(  t+\beta\right)  ^{2}}\right)  }{\sqrt{\left(
x+\alpha\right)  ^{2}-\left(  t+\beta\right)  ^{2}}} \tag{25}%
\end{align}

\bigskip Then, the suitable initial condition for the vacuum acts also as
potential\ in the Green's method (for any $G^{\pm}\left(  t,x\right)  $):%
\begin{gather}
\int(\omega G^{\pm}\left(  \omega,p\right)  -\sqrt{p^{2}+m^{2}}G^{\pm}\left(
\omega,p\right)  )e^{-i\left(  \omega t-p\cdot x\right)  }e^{-i\left(
\sqrt{p^{2}+m^{2}}\beta-p\alpha\right)  }dpd\omega=\tag{26}\\
=-\frac{1}{\left(  2\pi\right)  ^{2}}\int e^{-i\left(  \omega t-p\cdot
x\right)  }dpd\omega e^{-i\left(  \sqrt{p^{2}+m^{2}}\beta-p\alpha\right)
}\nonumber
\end{gather}

\section{The dimension-dependence of the Green function}

\begin{theorem}
The Green function of the square root pseudodifferential operator in the case
of even dimensional spacetime is exactly integrable being expressed in a
closed form as derivatives of the Mac Donald's function $K_{2}$ (Modified
Bessel Function of the Second Kind).
\end{theorem}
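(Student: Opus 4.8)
The plan is to treat the even-dimensional case uniformly by computing the bare kernel $I_d(t,x)\equiv\int e^{-i\left(\sqrt{p^2+m^2}\,t-p\cdot x\right)}\,d^dp$ (so that $G_D\equiv G^+-G^-\propto I_d$) in arbitrary spatial dimension $d$, and only afterwards specialize to $d$ odd, i.e. spacetime dimension $D=d+1$ even. First I would carry out the angular integration with the standard identity $\int_{S^{d-1}}e^{ip\cdot x}\,d\Omega=(2\pi)^{d/2}(|p|\,|x|)^{1-d/2}J_{d/2-1}(|p|\,|x|)$, which reduces $I_d$ to the one-dimensional radial (Hankel-type) integral
\[ I_d\;\propto\;r^{1-d/2}\int_0^\infty e^{-i\sqrt{p^2+m^2}\,t}\,p^{d/2}\,J_{d/2-1}(pr)\,dp,\qquad r\equiv|x|, \]
with Bessel order $\nu=\tfrac d2-1$. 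This is exactly the step that produced the $\sin(|p|\,|x|)$ factor and formulas 3.914--3.915 in the one- and three-dimensional computations above.

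The decisive observation is a parity argument. For even spacetime dimension $D=2n$ the spatial dimension $d=2n-1$ is odd, so the order $\nu=\tfrac d2-1=n-\tfrac32$ is a half-integer; the corresponding $J_{n-3/2}$ is an \emph{elementary} (spherical Bessel) function, a finite combination of $\sin$, $\cos$ and powers. Hence the radial integrand is elementary and the $p$-integral collapses to the tabulated Salpeter-type Fourier integrals, evaluating in closed form to Macdonald functions $K_\nu\!\left(m\sqrt{x^2-t^2}\right)$ once the regularization $t\mapsto t(1\mp i\varepsilon)$ is used to fix the branch across the light cone. This is precisely why $D=2$ gave $K_1$ (eq.~(19)) and $D=4$ gave $K_2$ (eq.~(11)). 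By contrast, odd $D$ forces integer $\nu$, the Bessel function is no longer elementary, and this closed-form route breaks down --- which is the dimensional dichotomy the theorem asserts.

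To reach the stated form ``derivatives of $K_2$'' I would next establish the dimensional ladder
\[ I_{d+2}\;=\;-\,2\pi\,\frac1r\,\frac{\partial}{\partial r}\,I_d, \]
obtained by differentiating the radial integrand and using $\frac{d}{dz}\!\left[z^{-\nu}J_\nu(z)\right]=-z^{-\nu}J_{\nu+1}(z)$. Since $I_d$ depends on $x$ only through $s\equiv\sqrt{x^2-t^2}$ (times a spectator factor $t$), on such functions $\frac1r\partial_r=\frac1s\frac{d}{ds}$. Taking the four-dimensional result $I_3\propto t\,m^2\,s^{-2}K_2(ms)$ of eq.~(11) as the base case and iterating the ladder $n-2$ times reaches every even dimension $D=2n$; the Macdonald identity $\frac1s\frac{d}{ds}\!\left[s^{-\nu}K_\nu(ms)\right]=-m\,s^{-\nu-1}K_{\nu+1}(ms)$ shows that each step raises the index by one, so
\[ I_{2n-1}\;\propto\;t\left(\frac1s\frac{d}{ds}\right)^{\!n-2}\frac{K_2(ms)}{s^2}\;\propto\;t\,\frac{K_n(ms)}{s^{n}}, \]
i.e. a closed-form $(n-2)$-fold derivative of the $K_2$ kernel, which is exactly the claim.

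The hard part will be rigor in the analytic and distributional steps rather than the algebra: justifying differentiation under the integral sign and its commutation with the limit $\varepsilon\to0$, since these integrals converge only in the tempered-distribution sense and the kernels are singular on the light cone $s=0$. One must also propagate the $i\varepsilon$ prescription consistently along the ladder so that the branch of $s=\sqrt{x^2-t^2(1\mp i\varepsilon)^2}$ is the same in every dimension, and check that the base-case normalization matches eq.~(11). A minor bookkeeping point is to state precisely what ``derivatives of $K_2$'' means --- namely the operator $\frac1s\frac{d}{ds}$, equivalently the recurrence climb $K_2\to K_3\to\cdots$ --- so that the case $D=2$ ($K_1$), which sits one rung \emph{below} $K_2$, is correctly excluded from the statement.
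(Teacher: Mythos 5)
Your proposal is correct, and it reaches the theorem by a related but genuinely different mechanism than the paper's own proof. The paper stays inside the momentum integral: after the hyperspherical angular integration produces $\frac{p^{N-2}}{|x|}\sin(|p||x|)$, it invokes the identity $p^{N-2}\sin(|p||x|)=\left(i\frac{d}{dx}\right)^{N-3}\left(p\sin(|p||x|)\right)$ --- which closes back on the sine only when $N-3$ is even, and this is precisely where the even-spacetime-dimension restriction enters --- and then pulls the $(N-3)$-fold derivative outside the integral, so that what remains is literally the $N=3$ radial integral already evaluated as $K_{2}$ in eq.~(11). Your route instead works on the position-space side: the Hankel-transform ladder $I_{d+2}=-2\pi\,\frac{1}{r}\partial_{r}I_{d}$ steps two spatial dimensions at a time, and the Macdonald recurrence $\frac{1}{s}\frac{d}{ds}\left[s^{-\nu}K_{\nu}(ms)\right]=-m\,s^{-\nu-1}K_{\nu+1}(ms)$ lets you iterate it in closed form. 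The two mechanisms are equivalent (one application of $\frac{1}{r}\partial_{r}$ trades for two of the paper's plain $x$-derivatives), but yours buys two things the paper does not deliver: an explicit evaluation $G_{D}\propto t\,K_{n}(ms)/s^{n}$ in spacetime dimension $2n$, identifying the order of the Macdonald function with the dimension, rather than an unevaluated $(N-3)$-fold derivative of the $K_{2}$ kernel; and a cleaner statement of the parity obstruction (half-integer Bessel order $\Leftrightarrow$ elementary spherical Bessel function), where the paper's parity restriction is hidden in the sign bookkeeping of its identity (31). Your closing caveats are also well placed: the paper silently interchanges derivatives, integrals and the $\varepsilon\to 0$ limit, and its final formula (33), read literally at $N=1$, would require a negative power of $\frac{d}{dx}$ --- exactly the $D=2$/$K_{1}$ edge case you flag as needing to be excluded from the statement or read as an antiderivative.
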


\begin{proof}
Following the same procedure as before, we introduce hyperspherical
coordinates as usual:%
\begin{align}
x_{1}  &  =r\cos\phi_{1}\tag{27}\\
x_{2}  &  =r\sin\phi_{1}\cos\phi_{2}\nonumber\\
x_{3}  &  =r\sin\phi_{1}\sin\phi_{2}\cos\phi_{3}\nonumber\\
&  \cdot\nonumber\\
&  \cdot\nonumber\\
&  \cdot\nonumber\\
x_{N}  &  =r\sin\phi_{1}\sin\phi_{2}\sin\phi_{3}\cdot\cdot\cdot\cdot\sin
\phi_{N-1}\cos\phi_{N}\nonumber
\end{align}
consequently, our problem will be the following integral%
\begin{gather}
\int e^{-i\left(  \left(  1\mp i\varepsilon\right)  \sqrt{p^{2}+m^{2}%
}t-\left\vert p\right\vert \left\vert x\right\vert \cos\theta_{1}\right)
}p^{N-1}dp\sin\theta_{1}d\theta_{1}\sin\theta_{2}d\theta_{2}\cdot\cdot
\cdot\sin\theta_{N-1}d\theta_{N-1}d\varphi=\tag{28}\\
=F\underset{\equiv I}{\underbrace{\int e^{-i\left(  \left(  1\mp
i\varepsilon\right)  \sqrt{p^{2}+m^{2}}t-\left\vert p\right\vert \left\vert
x\right\vert \cos\theta_{1}\right)  }p^{N-1}dp\sin\theta_{1}d\theta_{1},}%
}\nonumber
\end{gather}
where is a numerical factor coming from the: $\sin\theta_{2}d\theta_{2}%
\cdot\cdot\cdot\sin\theta_{N-1}d\theta_{N-1}d\varphi$ integration.%
\begin{equation}
F=\int\sin\theta_{2}d\theta_{2}\cdot\cdot\cdot\sin\theta_{N-1}d\theta
_{N-1}d\varphi=2\pi2^{N-2} \tag{29}%
\end{equation}
Let us to compute the $I$ integral, integrating in $\theta_{1}$first:%
\begin{align}
I  &  =\int e^{-i\left(  \left(  1\mp i\varepsilon\right)  \sqrt{p^{2}+m^{2}%
}t-\left\vert p\right\vert \left\vert x\right\vert \cos\theta_{1}\right)
}p^{N-1}dp\left(  -d\cos\theta_{1}\right) \tag{30}\\
&  =\int e^{-i\left(  1\mp i\varepsilon\right)  \sqrt{p^{2}+m^{2}}t}\text{
}p^{N-1}dp\left(  \left.  -\frac{e^{-i\left\vert p\right\vert \left\vert
x\right\vert \cos\theta_{1}}}{i\left\vert p\right\vert \left\vert x\right\vert
\cos\theta_{1}}\right\vert _{\theta_{i}=0}^{\theta_{f=}\pi}\right) \nonumber\\
&  =\int e^{-i\left(  1\mp i\varepsilon\right)  \sqrt{p^{2}+m^{2}}t}\text{
}\frac{p^{N-1}}{i\left\vert p\right\vert \left\vert x\right\vert }dp\left(
-e^{-i\left\vert p\right\vert \left\vert x\right\vert }+e^{i\left\vert
p\right\vert \left\vert x\right\vert }\right) \nonumber\\
&  =\int e^{-i\left(  1\mp i\varepsilon\right)  \sqrt{p^{2}+m^{2}}t}\text{
}\frac{p^{N-2}}{\left\vert x\right\vert }dp\text{ }2\sin\left(  \left\vert
p\right\vert \left\vert x\right\vert \right) \nonumber
\end{align}
We have been solved the case for $d=2(N=1)$ and $d=4(N=3)$. We can demonstrate
the integrability of the higher dimensional family of cases where the
spacetime dimension is even $d=1+N$. The starting point is to use the
formula:
\begin{align}
p^{N-2}\sin\left(  \left\vert p\right\vert \left\vert x\right\vert \right)
&  =\left(  \frac{id}{dx}\right)  ^{N-3}\left(  p\sin\left(  \left\vert
p\right\vert \left\vert x\right\vert \right)  \right) \tag{31}\\
e.g  &  :p^{5-2}\sin\left(  \left\vert p\right\vert \left\vert x\right\vert
\right)  =-\frac{d^{5-3}}{dx}\left(  p\sin\left(  \left\vert p\right\vert
\left\vert x\right\vert \right)  \right) \nonumber\\
p^{3}\sin\left(  \left\vert p\right\vert \left\vert x\right\vert \right)   &
=-\frac{d^{2}}{dx}\left(  p\sin\left(  \left\vert p\right\vert \left\vert
x\right\vert \right)  \right)  =-p^{2}\frac{d}{dx}\left(  \cos\left(
\left\vert p\right\vert \left\vert x\right\vert \right)  \right)  =+p^{3}%
\sin\left(  \left\vert p\right\vert \left\vert x\right\vert \right) \nonumber
\end{align}%
\begin{align*}
e.g  &  :p^{7-2}\sin\left(  \left\vert p\right\vert \left\vert x\right\vert
\right)  =+\frac{d^{7-3}}{dx^{7-3}}\left(  p\sin\left(  \left\vert
p\right\vert \left\vert x\right\vert \right)  \right) \\
p^{5}\sin\left(  \left\vert p\right\vert \left\vert x\right\vert \right)   &
=p^{2}\frac{d^{3}}{dx^{3}}\left(  \cos\left(  \left\vert p\right\vert
\left\vert x\right\vert \right)  \right)  =-p^{3}\frac{d^{2}}{dx^{2}}\left(
\sin\left(  \left\vert p\right\vert \left\vert x\right\vert \right)  \right)
=-p^{4}\frac{d}{dx}\cos\left(  \left\vert p\right\vert \left\vert x\right\vert
\right)  =\\
&  =+p^{5}\sin\left(  \left\vert p\right\vert \left\vert x\right\vert \right)
\end{align*}
then , the expression (31) becomes to:%
\begin{align}
I  &  =\int e^{-i\left(  1\mp i\varepsilon\right)  \sqrt{p^{2}+m^{2}}t}\text{
}\frac{p^{N-2}}{\left\vert x\right\vert }dp\text{ }2\sin\left(  \left\vert
p\right\vert \left\vert x\right\vert \right)  =\tag{32}\\
&  =\frac{2}{\left\vert x\right\vert }\left(  \frac{id}{dx}\right)
^{N-3}\left[  \int e^{-i\left(  1\mp i\varepsilon\right)  \sqrt{p^{2}+m^{2}}%
t}\text{ }pdp\text{ }\sin\left(  \left\vert p\right\vert \left\vert
x\right\vert \right)  \right]  =\nonumber\\
&  =2\left(  \frac{id}{dx}\right)  ^{N-3}\left[  \pm\frac{2i}{\left(
2\pi\right)  ^{2}}\underset{\varepsilon\rightarrow0}{\lim}\theta\left(  \pm
t\right)  it\left(  1\mp i\varepsilon\right)  m^{2}\frac{K_{2}\left(
m\sqrt{x^{2}-t^{2}\left(  1\mp i\varepsilon\right)  ^{2}}\right)  }%
{x^{2}-t^{2}\left(  1\mp i\varepsilon\right)  ^{2}}\right] \nonumber
\end{align}
consequently for the case of $(2M+1,1)$ dimensions (e.g. even dimensions) we
have%
\begin{equation}
G^{\pm}\left(  t,x\right)  =\pi2^{N}\left(  \frac{id}{dx}\right)
^{N-3}\left[  \pm\frac{2i}{\left(  2\pi\right)  ^{2}}\underset{\varepsilon
\rightarrow0}{\lim}\theta\left(  \pm t\right)  it\left(  1\mp i\varepsilon
\right)  m^{2}\frac{K_{2}\left(  m\sqrt{x^{2}-t^{2}\left(  1\mp i\varepsilon
\right)  ^{2}}\right)  }{x^{2}-t^{2}\left(  1\mp i\varepsilon\right)  ^{2}%
}\right]  \tag{33}%
\end{equation}

\begin{remark}
The integrability in even dimensions is due to an underlying symplectic
structure ("phase space map", only valid in even number of spacetime dimensions).
\end{remark}
\end{proof}

\section{Relativistic position operator from velocity operator}

\subsection{Velocity operator: definition and action}

Canonically the reasonable assumption as our starting point are the following relations:%

\begin{equation}
H\left(  x,p\right)  \equiv\sqrt{p^{2}+m^{2}} \tag{34}%
\end{equation}
and
\begin{align}
\frac{dp}{dt}  &  =-\frac{\partial H\left(  x,p\right)  }{\partial x}%
\tag{35}\\
\frac{dx}{dt}  &  =\frac{\partial H\left(  x,p\right)  }{\partial p} \tag{36}%
\end{align}
Then, the velocity operator is defined from:%
\begin{align}
\widehat{u}^{\alpha}  &  =\frac{d\widehat{x}^{\alpha}}{dt}=-i\left[  H\left(
x,p\right)  ,\widehat{x}^{\alpha}\right] \tag{37}\\
&  =\frac{\partial H\left(  x,p\right)  }{\partial p_{\alpha}}=\frac
{\delta^{\alpha\beta}p_{\beta}}{\sqrt{p^{2}+m^{2}}} \tag{38}%
\end{align}
with the action as%
\begin{equation}
\left(  \widehat{u}^{\alpha}\psi\right)  \left(  t,x\right)  =\frac{1}{\left(
2\pi\right)  ^{3}}\int\int\widehat{u}^{\alpha}\left(  \zeta\right)
e^{i\zeta\cdot\left(  x-y\right)  }\psi\left(  t,y\right)  d^{3}yd^{3}%
\zeta\tag{39}%
\end{equation}%
\begin{align}
\left(  \widehat{u}^{\alpha}\psi\right)  \left(  t,x\right)   &  =\frac
{1}{\left(  2\pi\right)  ^{3}}\int\int\frac{\zeta^{\alpha}}{\sqrt{\zeta
^{2}+m^{2}}}e^{i\zeta\cdot\left(  x-y\right)  }\psi\left(  t,y\right)
d^{3}yd^{3}\zeta\tag{40}\\
&  =\frac{1}{\left(  2\pi\right)  ^{3}}\int\int\frac{\zeta^{\alpha}}%
{\sqrt{\zeta^{2}+m^{2}}}e^{i\zeta\cdot\left(  x-y\right)  }\psi\left(
t,y\right)  d^{3}yd^{2}\zeta\nonumber
\end{align}
In 1+1 dimensions we have
\begin{equation}
\left(  \widehat{u}^{\alpha}\psi\right)  \left(  t,x\right)  =\frac{1}{2\pi
}\int\int\frac{\zeta^{\alpha}}{\sqrt{\zeta^{2}+m^{2}}}e^{i\zeta\cdot\left(
x-y\right)  }\psi\left(  t,y\right)  dyd\zeta\tag{41}%
\end{equation}
and knowing that :%
\begin{equation}
\left.  \psi\left(  t,y\right)  \right\vert _{1+1}=\frac{1}{2\pi}tm\frac
{K_{1}\left(  m\sqrt{y^{2}-t^{2}}\right)  }{\sqrt{y^{2}-t^{2}}} \tag{42}%
\end{equation}
we insert it in (41), then, the equation to solve becomes to:%
\begin{equation}
\left(  \widehat{u}^{\alpha}\psi\right)  \left(  t,x\right)  =\frac{1}{2\pi
}\int\int\frac{\zeta^{\alpha}}{\sqrt{\zeta^{2}+m^{2}}}e^{i\zeta\cdot\left(
x-y\right)  }\frac{1}{2\pi}tm\frac{K_{1}\left(  m\sqrt{y^{2}-t^{2}}\right)
}{\sqrt{y^{2}-t^{2}}}dyd\zeta\tag{43}%
\end{equation}
Using formula GR 3.365 (2) in the $m\neq0$ case (the non-massive case will be
analyzed separately), the expression to integrate in $y$ is:%
\begin{equation}
\left(  \widehat{u}^{\alpha}\psi\right)  \left(  t,x\right)  =\frac{1}{\left(
2\pi\right)  ^{2}}\int\int im^{2}K_{1}\left(  m(x-y)\right)  t\frac
{K_{1}\left(  m\sqrt{y^{2}-t^{2}}\right)  }{\sqrt{y^{2}-t^{2}}}dy \tag{44}%
\end{equation}
consequently, the action of the velocity operator on the state solution given
by the "square root" Hamiltonian is (using GR\ 7$^{8})$%
\begin{align}
\left(  \widehat{u}^{\alpha}\psi\right)  \left(  t,x\right)   &  =\frac
{1}{\left(  2\pi\right)  ^{2}}\frac{m^{2}t^{2}}{\sqrt{2}}K_{1}\left(
m\sqrt{2\left(  x^{2}-t^{2}\right)  }\right) \tag{45}\\
&  \left(  K_{-\nu}=K_{\nu}\right) \nonumber
\end{align}
The extension to more dimensions is straighforward.

\subsection{Determination of the relativistic position operator}

\begin{theorem}
The canonical position operator in $p$-representation, namely $\partial_{p}$ ,
acting on the convoluted state (23)(equivalent to the initial condition in the
operatorial approach) determines(in the case of null eigenvalue) univoquely
and simultaneously the action of the velocity operator in the $p$
representation plus the ground state of the physical system under consideration.
\end{theorem}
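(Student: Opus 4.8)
The plan is to make the statement concrete by first evaluating the convoluted state (23) explicitly. Performing the trivial $\delta$-integration collapses (23) to the phase $\Psi_{0}(p)=a\,e^{-i\left(\beta\sqrt{p^{2}+m^{2}}-\alpha p\right)}$, which I will treat as a one-variable object (the extension to higher dimension being component-wise, as noted after (45)). I would then let the canonical position operator in the momentum representation, $\partial_{p}$ (up to the conventional factor $i$), act on it. This is a single application of the chain rule: writing the exponent as $S(p)=\beta\sqrt{p^{2}+m^{2}}-\alpha p$, one has $S'(p)=\beta\,\frac{p}{\sqrt{p^{2}+m^{2}}}-\alpha$, so that
\[
\partial_{p}\Psi_{0}(p)=-i\left(\beta\,\frac{p}{\sqrt{p^{2}+m^{2}}}-\alpha\right)\Psi_{0}(p).
\]

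The crucial observation, and the heart of the claim, is that the coefficient $\frac{p}{\sqrt{p^{2}+m^{2}}}$ is exactly the velocity operator $\widehat{u}$ of (38) in the $p$-representation (a multiplication operator). Hence the action of $\partial_{p}$ rewrites as $\partial_{p}\Psi_{0}=-i(\beta\widehat{u}-\alpha)\Psi_{0}$, or equivalently as the homogeneous first-order condition $\left(\partial_{p}+i\beta\widehat{u}-i\alpha\right)\Psi_{0}=0$. This is the \emph{null eigenvalue} statement: $\Psi_{0}$ is annihilated by the shifted momentum operator $\partial_{p}+i\beta\widehat{u}-i\alpha$, so it is the zero-eigenvalue (vacuum) state of that operator, while the very coefficient appearing in the annihilation condition is the velocity operator itself. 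That is what \emph{simultaneously} refers to: a single relation delivers both $\widehat{u}$ and $\Psi_{0}$.

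For the uniqueness (\emph{univocally}), I would argue from the elementary theory of linear ODEs: the relation above is a first-order linear homogeneous equation in $p$, whose solution space is one-dimensional, so $\Psi_{0}$ is fixed up to the single normalization constant $a$ of (23). Reading the equation the other way, the logarithmic derivative $\partial_{p}\ln\Psi_{0}=-i(\beta\widehat{u}-\alpha)$ recovers $\widehat{u}$ pointwise once $\alpha,\beta$ are fixed. I would close by connecting to the coherent/squeezed-state language advertised in the abstract: the operator $\partial_{p}+i\beta\widehat{u}-i\alpha$ plays the role of an annihilation operator and $\Psi_{0}$ its vacuum, which is why the null-eigenvalue state is naturally identified with the ground state of the system.

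The step I expect to be the genuine obstacle is not the differentiation (that is routine) but making the interpretive claim precise: one must fix the convention relating $\partial_{p}$ to the physical position operator (the factor $i$ and the sign), verify that the multiplication operator $p/\sqrt{p^{2}+m^{2}}$ is legitimately identified with the pseudodifferential velocity operator of (40)--(41) when acting in $p$-space, and state in which sense \emph{the} ground state is singled out, namely as the unique (up to scale) kernel of a first-order operator rather than as an eigenstate of $H$. Treating the massless limit $m\to0$ separately, as flagged after (43), is a secondary caveat.
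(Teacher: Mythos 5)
Your computation coincides with the paper's proof up to its equation (47): you collapse the $\delta$-integration in (23) to get $\Psi_{0}(p)=a\,e^{-i\left(\sqrt{p^{2}+m^{2}}\beta-p\alpha\right)}$ and apply the chain rule to obtain
\[
\partial_{p}\Psi_{0}\left(  p\right)  =-i\left[  \frac{p\beta}{\sqrt
{p^{2}+m^{2}}}-\alpha\right]  \Psi_{0}\left(  p\right)  ,
\]
which is exactly the paper's (47). The divergence --- and the genuine gap --- lies in what you do next. The theorem's clause ``in the case of null eigenvalue'' refers to the null eigenvalue of the canonical position operator $\partial_{p}$ itself: the paper's concluding step (its eq. (48)) imposes $\partial_{p}\Psi_{0}=0$ and reads off
\[
\frac{p\beta}{\sqrt{p^{2}+m^{2}}}\,\Psi_{0}\left(  p\right)  =\alpha
\,\Psi_{0}\left(  p\right)  ,
\]
i.e. the convoluted state is forced to satisfy the eigenvalue equation of the relativistic velocity operator (38)--(40) with eigenvalue $\alpha/\beta$; this is precisely what the Remark following the theorem asserts (``the respective eigenvalues of the relativistic velocity operator are obtained from the null eigenvalue of the (standard) canonical position operator''). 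You never impose $\partial_{p}\Psi_{0}=0$. Instead you relabel the identity itself as a ``null eigenvalue'' statement for the shifted operator $\partial_{p}+i\beta\widehat{u}-i\alpha$ --- but that condition is satisfied automatically by construction (it is just (47) rearranged), so it is not a ``case'' or hypothesis at all, and it never yields the velocity eigenvalue relation that the theorem's conclusion (``determines \ldots the action of the velocity operator'') is meant to capture.

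To be fair, your reading has real merit: since $\Psi_{0}$ is manifestly non-constant in $p$, the condition $\partial_{p}\Psi_{0}=0$ cannot hold identically but only at the stationary-phase momentum where $\beta p/\sqrt{p^{2}+m^{2}}=\alpha$, so your annihilation-operator/ODE-uniqueness formulation is arguably the more defensible mathematical statement, and your coherent-state gloss fits the paper's broader narrative about squeezed and coherent states. But as a proof of this theorem as the paper intends it, the single step that constitutes the paper's actual argument is absent; the fix is one line --- set $\partial_{p}\Psi_{0}=0$ in your displayed identity and conclude $\beta\,\widehat{u}\,\Psi_{0}=\alpha\,\Psi_{0}$, which delivers simultaneously the action of the velocity operator in the $p$-representation and the distinguished (ground) state.
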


\begin{proof}
From the initial condition (23)\ (the kernel can be straighforwardly extended
to any dimension) we have%
\begin{equation}
\Psi_{0}\left(  p\right)  \equiv ae^{-i\left(  \sqrt{p^{\prime2}+m^{2}}%
\beta-p^{\prime}\alpha\right)  } \tag{46}%
\end{equation}
Operating with $\widehat{x}\rightarrow\widehat{\partial}_{p}$ (e.g.: momentum
representation) we can see that:%
\begin{equation}
\partial_{p}\Psi_{0}\left(  p\right)  =-i\left[  \frac{p\beta}{\sqrt
{p^{2}+m^{2}}}-\alpha\right]  \Psi_{0}\left(  p\right)  \tag{47}%
\end{equation}
Then:
\begin{equation}
\partial_{p}\Psi_{0}\left(  p\right)  =0\rightarrow\frac{p\beta}{\sqrt
{p^{2}+m^{2}}}\Psi_{0}\left(  p\right)  =\alpha\Psi_{0}\left(  p\right)
\tag{48}%
\end{equation}
we obtain the action of the relativistic velocity operator (39-40) analyzed previously.
\end{proof}

Consequently we can arrive to the following:

\begin{remark}
Conversely, the respective eigenvalues of the relativistic velocity operator
are obtained from the null eigenvalue of the (standard) canonical position
operator in the momentum representation.
\end{remark}

\section{Discussion}

The meaning of the convoluted initial state $\Psi_{0}\left(  p\right)  $ can
be interpreted as follows: the conjugate variable to $x^{0}\left(
\text{identified with the \textit{physical time}}\right)  $%
\begin{equation}
p^{0}=-i\partial^{0} \tag{49}%
\end{equation}
is no longer well defined because clearly it must be expressed in terms of the
remaining variables. Then, the following identification is immediately
performed
\begin{align}
\Psi_{0}\left(  p\right)   &  \equiv ae^{-i\left(  \sqrt{p^{\prime2}+m^{2}%
}\beta-p^{\prime}\alpha\right)  }\tag{50}\\
&  =ae^{-i\left(  P^{0}\beta-p^{i}\alpha_{i}\right)  } \tag{51}%
\end{align}
where $P^{0}=\sqrt{-\partial^{i}\partial^{i}+m^{2}}$ and $p^{i}=-i\partial
^{i}$ at $x^{0}=0,$ with the consequence that the remaining generators of the
Lorentz group can be determined:%
\begin{align}
M^{ij}  &  =i\left(  x^{j}\partial^{i}-x^{i}\partial^{j}\right) \tag{52}\\
M^{0i}  &  =-\frac{1}{2}\left\{  x^{i},P^{0}\right\}  \tag{53}%
\end{align}
including the fact that $x^{i}$ and $P^{0}$ not commute. The generators on the
initial surface $x^{0}=0$(with conjugate variable $p^{0}$) split into
\textit{kinematical generators} $M^{ij}(rotations)$, $p^{i}(momenta)$ and the
\textit{dynamical} ones $M^{0i}(boost)$and $P^{0}(Hamiltonian)$ that displaces
the system away from the initial surface. From the algebraic point of view we
can check from the definition of the velocity operator:%
\begin{align}
\widehat{u}^{\alpha}  &  =\frac{d\widehat{x}^{\alpha}}{dt}=-i\left[  H\left(
x,p\right)  ,\widehat{x}^{\alpha}\right]  =\frac{\partial H\left(  x,p\right)
}{\partial p_{\alpha}}=\frac{\delta^{\alpha\beta}p_{\beta}}{\sqrt{p^{2}+m^{2}%
}}\tag{54}\\
&  =-\frac{1}{2}iM^{0i} \tag{55}%
\end{align}
that is directly related with the boost generator (notice the interplay
between the Poisson and quantum structure). As is easy to interpret, if we
have into account the spinorial (Clifford) structure of the double covering of
the Lorentz group, namely the $SL\left(  2\mathbb{C}\right)  $, the spin
degrees of freedom can be introduced (besides the orbital part): \
\begin{align}
M^{ij}  &  =i\left(  x^{j}\partial^{i}-x^{i}\partial^{j}\right)
+\varepsilon^{ijk}S^{k}\tag{56}\\
M^{0i}  &  =-\frac{1}{2}\left\{  x^{i},P^{0}\right\}  -\frac{i\varepsilon
^{ijk}\partial^{j}S^{k}}{P^{0}\pm m} \tag{57}%
\end{align}
however, if we also add the space time translations: $x^{\mu}\rightarrow$
$x^{\mu}+a^{\mu},$the Poincare group (as semidirect product of the Lorentz
group plus the space-time translations) acts on Hilbert states labeled by
vectors of the form%
\begin{equation}
\left\vert p^{i};m,s,s_{3}\right\rangle \tag{58}%
\end{equation}
which are interpreted as physical states (particles) with mass$m$, spin $s$,
3-momentum $p^{i}$ and magnetic quantum number $s_{3}.$ The positive or
negative energy depends on the sign of $\beta$ into the exponential of
$\Psi_{0}\left(  p\right)  .$ By the way, notice that this fact is connected
with the Lagrange multiplier prescription of P.A.M. Dirac\cite{dir1}where%
\begin{equation}
P^{0}=p^{0}+\lambda\left(  p^{\mu}p_{\mu}+m^{2}\right)  \tag{59}%
\end{equation}
then, eliminating $p^{0}$ through $\lambda$ (e.g. taking $P^{0}$ as
Hamiltonian) we have: $P^{0}=p^{0}=\pm\sqrt{p^{i}p^{i}+m^{2}}$ corresponding
to positive and negative energy solutions.

The familiar interpretation of the eigenvalues of an observable as the only
possible values that can result from measurements of the observable on any
state of the system is no longer tenable, because the expectation value of an
observable in a particular state and the average of the eigenvalues of the
observable weighted by the absolute square of the amplitude of the
corresponding eigenstate in the state under question are not equal. A simple
but telling example is the case of the operator of the Sakata-Taketani: the
velocity operator in the conventional language, has only zero eigenvalues, yet
it is an observable in the sense of pseudo-hermiticity and has a non-vanishing
expectation value in an arbitrary state. As was pointed out before\cite{Matt},
this kind of difficulty connected with the appearance of the indefinite metric
renders the choice between different possible operators for an observable
quantity much more difficult than in the spin 1/2 case.

\section{Mp$\left(  n\right)  $ and the algebraic interpretation of the square
root operator}

Geometrically, in our early work \cite{diego1}, we take as the starting point
the action functional that will describe the world-line of the superparticle
(measure on a superspace) as follows:
\begin{equation}
S=\int_{\tau_{1}}^{\tau_{2}}d\tau L\left(  x,\theta,\overline{\theta}\right)
=-m\int_{\tau_{1}}^{\tau_{2}}d\tau\sqrt{\overset{\circ}{\omega_{\mu}}%
\overset{\circ}{\omega^{\mu}}+{\mathbf{a}}\overset{.}{\theta}^{\alpha}%
\overset{.}{\theta}_{\alpha}-{\mathbf{a}}^{\ast}\overset{.}{\overline{\theta}%
}^{\overset{.}{\alpha}}\overset{.}{\overline{\theta}}_{\overset{.}{\alpha}}}
\tag{60}%
\end{equation}
where $\overset{\circ}{\omega_{\mu}}=\overset{.}{x}_{\mu}-i(\overset{.}%
{\theta}\ \sigma_{\mu}\overline{\theta}-\theta\ \sigma_{\mu}\overset
{.}{\overline{\theta}})$, and the dot indicates derivative with respect to the
parameter $\tau$, as usual. The above Lagrangian (we will not give the details
here) was constructed considering the line element (e.g.: measure, positive
square root of the interval) of the non-degenerated supermetric introduced in
\cite{diego1} $ds^{2}=\omega^{\mu}\omega_{\mu}+{\mathbf{a}}\omega^{\alpha
}\omega_{\alpha}-{\mathbf{a}}^{\ast}\omega^{\dot{\alpha}}\omega_{\dot{\alpha}%
},$ where the bosonic term and the Majorana bispinor compose a superspace
$(1,3|1)$, with coordinates $(t,x^{i},\theta^{\alpha},\bar{\theta}%
^{\dot{\alpha}})$, and where Cartan forms of supersymmetry group are described
by: $\omega_{\mu}=dx_{\mu}-i(d\theta\sigma_{\mu}\bar{\theta}-\theta\sigma
_{\mu}d\bar{\theta}),\qquad\omega^{\alpha}=d\theta^{\alpha},\qquad\omega
^{\dot{\alpha}}=d\theta^{\dot{\alpha}}$ (obeying evident supertranslational
invariance). \footnote[1]{As we have extended our manifold to include
fermionic coordinates, it is natural to extend also the concept of trajectory
of point particle to the superspace.} To do this, we take the coordinates
$x\left(  \tau\right)  $, $\theta^{\alpha}\left(  \tau\right)  $ and
$\overline{\theta}^{\overset{.}{\alpha}}\left(  \tau\right)  $ depending on
the evolution parameter $\tau.$ The Hamiltonian in square root form,
namely$\sqrt{m^{2}-\mathcal{P}_{0}\mathcal{P}^{0}-\left(  \mathcal{P}%
_{i}\mathcal{P}^{i}+\frac{1}{a}\Pi^{\alpha}\Pi_{\alpha}-\frac{1}{a^{\ast}}%
\Pi^{\overset{.}{\alpha}}\Pi_{\overset{.}{\alpha}}\right)  }\left\vert
\Psi\right\rangle =0$, was constructed defining the supermomenta as usual and,
due the nullification of this Hamiltonian, the Lanczos method for constrained
Hamiltonian systems was used.

Consequently, we have shown that there exist an algebraic interpretation of
the pseudodifferential operator (square root) in the case of an underlying
Mp$\left(  n\right)  $ group structure%

\begin{equation}
\sqrt{m^{2}-\mathcal{P}_{0}\mathcal{P}^{0}-\left(  \mathcal{P}_{i}%
\mathcal{P}^{i}+\frac{1}{a}\Pi^{\alpha}\Pi_{\alpha}-\frac{1}{a^{\ast}}%
\Pi^{\overset{.}{\alpha}}\Pi_{\overset{.}{\alpha}}\right)  }\left\vert
\Psi\right\rangle =0 \tag{61}%
\end{equation}%
\begin{equation}
\left\{  \left[  m^{2}-\mathcal{P}_{0}\mathcal{P}^{0}-\left(  \mathcal{P}%
_{i}\mathcal{P}^{i}+\frac{1}{a}\Pi^{\alpha}\Pi_{\alpha}-\frac{1}{a^{\ast}}%
\Pi^{\overset{.}{\alpha}}\Pi_{\overset{.}{\alpha}}\right)  \right]  _{\beta
}^{\alpha}\left(  \Psi L_{\alpha}\right)  \right\}  \Psi^{\beta}=0 \tag{62}%
\end{equation}

\bigskip then, both structures can be identified: e.g. $\sqrt{m^{2}%
-\mathcal{P}_{0}\mathcal{P}^{0}-\left(  \mathcal{P}_{i}\mathcal{P}^{i}%
+\frac{1}{a}\Pi^{\alpha}\Pi_{\alpha}-\frac{1}{a^{\ast}}\Pi^{\overset{.}%
{\alpha}}\Pi_{\overset{.}{\alpha}}\right)  }\leftrightarrow\left[
m^{2}-\mathcal{P}_{0}\mathcal{P}^{0}-\left(  \mathcal{P}_{i}\mathcal{P}%
^{i}+\frac{1}{a}\Pi^{\alpha}\Pi_{\alpha}-\frac{1}{a^{\ast}}\Pi^{\overset
{.}{\alpha}}\Pi_{\overset{.}{\alpha}}\right)  \right]  _{\beta}^{\alpha
}\left(  \Psi L_{\alpha}\right)  $ being the state $\Psi$ the square root of a
spinor $\Phi$(where the "square root" Hamiltonian acts) such that it can be
bilinearly defined as $\Phi=\Psi L_{\alpha}\Psi.$ Our goal in these references
was based on the observation that the operability of the pseudodifferential
"square root" Hamiltonian can be clearly interpreted if it acts on the square
root of the physical states. In the case of the Metaplectic group, the square
root of a spinor certainly exist [\cite{meta}, \cite{dir}\cite{sann}%
\cite{Major}] making this interpretation (61-62) fully consistent from the
relativistic and \ group theoretical viewpoint.

It is interesting to note, that in ref.\cite{Dattoli} the Dirac factorization
of the one dimensional relativistic Schrodinger equation was treated
introducing the so called quantum simulation of the Dirac equation \cite{Nat}.
This is, in effect, a toy model apparently capable to simulate a genuine
quantum relativistic effect, as the Zitterbewegung. However, the vector
$\underline{\Psi}$ in \cite{Dattoli} is not an spinor and eq. (37) from
\cite{Dattoli} is not the relativistic counterpart of the Pauli equation:
there are not spin degrees of freedom and relativistic invariance. \ The
construction given there is only a mathematical artifact in order to mimify
the relativistic effects in a sharp contrast with equation (62) that is fully
relativistic and capable of include a complete (super) multiplet (spanning
spins from 0 , 1/2, 1, 3/2, 2 ) of physical states. In the next paragraph, we
will describe these states (truly spinorial and relativistic ones) coming from
the algebraic correspondence in order to compare they with the respective
results of the quantum simulation of the Dirac equation results presented in
\cite{Dattoli}.

\subsection{Superspinorial Zitterbewegung}

Now we will pass to analyze and review the description given in \cite{epjc} to
see the origin of the quantum relativistic effects as the Zitterbewegung.
Concerning to the solutions obtained in the \ "algebro-pseudodifferential"
correspondence, we must regard that there are two types of states: the basic
(non-observable) ones and observable physical states (see from another point
of view the results of refs.\cite{diego1}\cite{diego2}\cite{diego3}%
\cite{epjc}). The basic states are coherent states corresponding to the double
covering of the $SL(2C)$ or the metaplectic group\cite{diego1}\cite{diego2}%
\cite{diego3}\cite{epjc} responsible for projecting the symmetries of the 6
dimensional $Mp(4)$ group space to the 4 dimensional spacetime by means of a
bilinear combination of the $Mp(4)$ generators.

Regarding previous works \cite{diego1, diego2}, the supermultiplet solution
for the geometric lagrangian was%
\[
g_{ab}(0,\lambda)=\left\langle \psi_{\lambda}\left(  t\right)  \right\vert
L_{ab}\left\vert \psi_{\lambda}\left(  t\right)  \right\rangle =e^{-\left(
\frac{m}{\left\vert a\right\vert }\right)  ^{2}t^{2}+c_{1}t+c_{2}}%
e^{\xi\varrho\left(  t\right)  }\chi_{f}\langle\psi_{\lambda}(0)|\left(
\begin{array}
[c]{c}%
a\\
a^{\dagger}%
\end{array}
\right)  _{ab}|\psi_{\lambda}(0)\rangle
\]
where we have been written the corresponding indices for the simplest
supermetric state solution being $L_{ab}$ the corresponding generators $\in
Mp\left(  n\right)  $ in the representation given in\cite{sann} \cite{meta}and
$\chi_{f}$, coming from the odd generators of the big covering group related
to the symmetries of the specific model \ (will not be treated in this paper,
and will be left aside). Consider, for simplicity, the `square' solution for
the three compactified dimensions \cite{diego2} (spin $\lambda$ fixed,
$\xi\equiv-\left(  \overline{\xi}^{\overset{.}{\alpha}}-\xi^{\alpha}\right)
$) We have obtained schematically for the exponential even fermionic part
\begin{align}
\varrho\left(  t\right)  \equiv\overset{\circ}{\phi}_{\alpha}\left[  \left(
\alpha e^{i\omega t/2}\right.  \right.   &  \left.  \left.  +\right.  \right.
\left.  \left.  \beta e^{-i\omega t/2}\right)  -\left(  \sigma^{0}\right)
_{\overset{.}{\alpha}}^{\alpha}\left(  \alpha e^{i\omega t/2}-\beta
e^{-i\omega t/2}\right)  \right] \tag{63}\\
&  +\frac{2i}{\omega}\left[  \left(  \sigma^{0}\right)  _{\alpha}^{\overset
{.}{\ \beta}}\ \overline{Z}_{\overset{.}{\beta}}+\left(  \sigma^{0}\right)
_{\ \overset{.}{\alpha}}^{\alpha}\ Z_{\alpha}\right]  \tag{64}%
\end{align}

where $\overset{\circ}{\phi}_{\alpha},Z_{\alpha},\overline{Z}_{\overset
{.}{\beta}}$ are constant spinors, and $\alpha$ and $\beta$ are $\mathbb{C}%
$-numbers (the constant $c_{1}\in\mathbb{C}$ due the obvious physical reasons
and the chirality restoration of the superfield solution [1,2,10]).

By consistency, as in the case of the string, two geometric-physical options
will be related to the orientability of the superspace trajectory\cite{ger}:
$\alpha=\pm\beta$. We have take, without lose generality $\alpha=+\beta$ then,
exactly, there are two possibilities:

i) the compact case (which was given before in \cite{diego2}\cite{diego1})%
\begin{equation}
\varrho\left(  t\right)  =\left(
\begin{array}
[c]{c}%
\overset{\circ}{\phi}_{\alpha}\cos\left(  \omega t/2\right)  +\frac{2}{\omega
}Z_{\alpha}\\
-\overset{\circ}{\overline{\phi}}_{\overset{\cdot}{\alpha}}\sin\left(  \omega
t/2\right)  -\frac{2}{\omega}\overline{Z}_{\overset{.}{\alpha}}%
\end{array}
\right)  \tag{65}%
\end{equation}
ii) and the non-compact case%
\begin{equation}
\varrho\left(  t\right)  =\left(
\begin{array}
[c]{c}%
\overset{\circ}{\phi}\cosh\left(  \omega t/2\right)  +\frac{2}{\omega
}Z_{\alpha}\\
-\overset{\circ}{\overline{\phi}}_{\overset{\cdot}{\alpha}}\sinh\left(  \omega
t/2\right)  -\frac{2}{\omega}\overline{Z}_{\overset{.}{\alpha}}%
\end{array}
\right)  \tag{66}%
\end{equation}
obviously (in both cases) represents a \textit{Majorana fermion} where the
$\mathbb{C}$ (or $hypercomplex$ wherever the case$)$ symmetry is inside of the
constant spinors.

The spinorial even part of the superfield solution in the exponent becomes to
\begin{equation}
\xi\varrho\left(  t\right)  =\theta^{\alpha}\left(  \overset{\circ}{\phi
}_{\alpha}\cos\left(  \omega t/2\right)  +\frac{2}{\omega}Z_{\alpha}\right)
-\overline{\theta}^{\overset{\cdot}{\alpha}}\left(  -\overset{\circ}%
{\overline{\phi}}_{\overset{\cdot}{\alpha}}\sin\left(  \omega t/2\right)
-\frac{2}{\omega}\overline{Z}_{\overset{.}{\alpha}}\right)  \tag{67}%
\end{equation}
for the $\mathbb{C}$ (or $hypercomplex$ wherever the case$)$ symmetry. We
easily see that in the above expression there appear a type of
\textit{Zitterbewegung} or continuous oscillation between the chiral and
antichiral part of the bispinor $\varrho(t)$. ( see for example, Figures 1, 2,
3 and 4 in ref. \cite{epjc} are snapshots describing the time evolution of the
oscillating effect for suitable values of the parameters of the vacuum
solution and with an increasing $\omega t\sim t/\left\vert a\right\vert $
respectively ($\omega_{1}<\omega_{2}<\omega_{3}....$)).

\begin{remark}
the physical meaning of such an oscillation (Zitterbewegung) is simply an
underlying natural supersymmetric effect because there exists a kind of
duality between supersymmetrical and relativistic effects, pointed out
previously in \cite{casa}.
\end{remark}

\section{Physical examples}

\subsection{Rogue waves}

As is more or less understandable from their scientific observation at the
Draupner oil platform in the North sea\cite{Dra}, rogue waves (sometimes
described as monster waves, freak waves or giant waves) appear with an
amplitude extremely larger with respect to the amplitude of the surrounding
wave crests \cite{Osb}. Because the conditions that cause the enormous growth
of rogue waves is still not well known, they have become a subject of intense
research after their experimental realization and simulation in various
physical systems having an underlying nonlinear character, namely, optical
fibers \cite{fo1}\cite{fo2}, plasmas\cite{pl}.

The general form to attack the problem in many of these contexts is to
introduce different variants of the nonlinear Schrodinger equation (NLSE)\ due
the modulation of its present instability \cite{in}. This kind of modulation
can be effectively implemented at laboratory level in the Bose-Einstein
condensation (BEC) scenarios where the Feshbach resonance technique
\cite{Fesh} allows to control the dynamics of matter rogue waves by mean the
feasibility of tuning interatomic interactions. Also, in the same context of
BEC, the quasi-one dimensional Gross-Pitaevskii (GP)\ as the NLSE\ with
trapping potential is usually utilized.

The interesting point is that in recent years the study of spinor condensates
has been an important issue experimentally and theoretically speaking
\cite{spinorcond} As is well known , he dynamics of the spinor condensate is
described within the mean-field approximation by the multicomponent
GP-equations containing nontrivial nonlinear terms mimifiying the SU(2)
symmetry of the spins. Alkali-metal atoms are usually represented of a such
manner. Now, we can see that rogue waves can be obtained by the free
Schrodinger equation solution in the algebro-pseudodifferential framework
given above. Controlling the parameters of the solution given by expressions
(63-67), the expected waves are obtained when the time variable have complex
coefficients (e.g$\rightarrow it)$. It is clear that it is a kind of "
dispersion in time" that is responsible of the "exploded" wave dynamics .
3-Dimensional figures 5,6 and 7 show the wave behaviour with an increasing
imaginary coefficient in the time variable, respectively. Figures 1,2,3 and 4
show snapshots describing the evolution of the oscillating effect in complex time..

\subsection{The Nambu-Goto action and the microcanonical propagator}

Here we will make some\ comments about the pseudodifferential operators and
physical systems with finite energy (e.g. microcanonical ensemble) in
connection with the quantum field theoretical (QFT) viewpoint. Regarding our
previous reference (see full details in \cite{micro} and connected with a
string-high energy framework see \cite{norma} )the propagator for
black-hole/string/particle was constructed knowing that the Nambu-Goto
action\ is invariant under the reparametrizations. Using the "Born-Infeld"
choice for the dynamical variables \cite{barb} we obtain the action in the
form
\begin{equation}
S=-\frac{\kappa}{\alpha^{\prime}}\int_{\tau1}^{\tau2}\overset{.}{x}_{0}%
d\sigma\ d\tau\ \sqrt{\left[  1-\left(  \partial_{0}x_{b}\right)  ^{2}\right]
\left[  1+\left(  \partial_{1}x_{a}\right)  ^{2}\right]  },\text{
\ \ \ \ \ \ \ \ \ \ \ \ \ \ }\left(  a,b=2,3;\ \partial_{1}x_{a}%
=\varepsilon_{1a}^{\ \ \ 0b}\partial_{0}x_{b}\right)  , \tag{68}%
\end{equation}

Therefore, the invariance with respect to the invariance of the coordinate
evolution parameter means that one of the dynamic variables of the theory
($x_{0}\left(  \tau\right)  $ in this case) becomes the observed time with the
corresponding non-zero Hamiltonian $H_{BI}=\Pi_{a}\overset{.}{x}^{a}%
-L=\sqrt{\alpha^{2}-\Pi_{b}\Pi^{b}},$where: $\Pi^{b}=\frac{\partial
L}{\partial\left(  \partial_{0}x_{b}\right)  },\alpha\equiv\frac{\kappa
\sqrt{1+\left(  \partial_{1}x_{a}\right)  ^{2}}}{\alpha^{\prime}}$ From the
simplest path-integral formalism, using quantum field theoretical arguments
and introducing the integral representation for a pseudodifferential operator
\cite{brich}(based in semigroup construction)
\begin{equation}
\int\left(  t^{2}+u^{2}\right)  ^{-\lambda}e^{itx}dt=\frac{2\pi^{1/2}}%
{\Gamma\left(  \lambda\right)  }\left(  \frac{\left\vert x\right\vert }%
{2u}\right)  ^{\lambda-1/2}K_{\lambda-1/2}\left(  u\left\vert x\right\vert
\right)  \tag{69}%
\end{equation}
where $K_{\nu}\left(  x\right)  $ is the MacDonald's function, we obtain
\cite{norma}\cite{micro}the following microcanonical propagator:
\[
D_{E}(t,\overline{x})=\frac{\delta\left(  E\right)  }{\omega^{2}-k^{2}%
-m^{2}+i\varepsilon}-
\]%
\begin{equation}
-8\pi i\alpha\delta\left(  \omega^{2}-k^{2}-m^{2}\right)  \overset{}%
{\underset{l=1}{\overset{\infty}{\sum}}}\frac{K_{-1}\left(  \alpha\left\vert
l\omega_{k}-E\right\vert \right)  }{E^{2}}\frac{\Omega\left(  E-l\omega
_{k}\right)  }{\Omega\left(  E\right)  }\theta\left(  E-l\omega_{k}\right)  ,
\tag{70}%
\end{equation}
where $\theta\left(  x\right)  $ is the usual step function. The first term in
the microcanonical propagator is the usual (non-termal) Feynman propagator,
the second one is the new microcanonical statistical part. The correct
description of the full N-extended body system is obtained explicitly
expanding the Mac Donald's function $K_{-1}$ in the second term of the free
microcanonical propagator \cite{norma}leading a nonlocal and nonlinear
generalization of the well known \ (string-theoretical) Veneziano amplitude
\cite{ven}. This observation leads us to highlight the next:

\begin{remark}
pseudo differential operators in QFT give rise to propagators which a
string-like type of structure emerges (Gamma type
string-amplitude)contributing to their statistical the relation between
temporal and normal ordering of the field operators.
\end{remark}

\subsection{Warped gravities, Randall Sundrum scenarios and the square root}

The last example coming from our reference\cite{diego2} showing itself the
consistency of this interpretation. The motivation to introduce
pseudodifferential operators was to find the consistent solution to the
hierarchy problem\cite{hier} and, due the lack of formal "first principles"
explanations, to the field theoretical localization mechanisms for scalar and
fermions \cite{shap} as well as for gauge bosons \cite{dv}. Some points coming
from the analisys of these previous works \cite{diego2}must be highlighted:

The remarkable property of the\ full solution involving beside the expressions
(63-67), the bosonic part namely: $g_{ab}(t)=e^{-\left(  \frac{m}%
{|{\mathbf{a}}|}\right)  ^{2}t^{2}+c_{1}t+c_{2}+\xi\varrho\left(  t\right)
}g_{ab}(0)$ \cite{diego2}is that the physical state $g_{ab}\left(  x\right)  $
is localized in a particular position of the space-time: the supermetric
$\mathbb{C}$ coefficients $\mathbf{a}$ $\left(  \mathbf{a}^{\ast}\right)
$play the important role of localize the fields in the bosonic part of the
superspace in similar and suggestive form as the well known "warp factors" in
multidimensional gravity\cite{bajc} for a positive (or negative) tension
brane. This Gaussian type solution is very well defined physical state in a
Hilbert space\cite{diego4}\cite{kl}from the mathematical point of view,
contrarily to the sual case $u\left(  y\right)  =ce^{-H\left\vert y\right\vert
}$ given in\cite{bajc} that, although were possible to find a manner to
include it in any Hilbert space, is strongly needed to take special
mathematical and physical particular assumptions whose meaning is obscure. For
a more complete picture the comparison with the case of 5-dimensional gravity
plus cosmological constant\cite{bajc} is clearly given with full details in
the table of reference\cite{diego2}\footnotetext[2]{the extended superspace
solution in the case contain all the 4-dimensional coordinates: $x\equiv
\left(  t,\overline{x}\right)  $, $c_{1}^{\prime}x\equiv c_{1\mu}^{\prime
}x^{\mu}$ and $c_{2}^{\prime}$ scalar.} .

\section{Concluding remarks}

In this work, we have been made a development and analysis of the problem
generated by the "square root" operator. Through this paper we logically
emphasize the non-locality and the relation with the Green function approach.

We show that there exists a close relation between the number of the spacetime
dimensions and the order of the of the cylinder functions ( MacDonald's
function in our case) having the case of even number of spacetime dimensions
an exact integrability.

The self-reproducing property of the Mc Donald's function, as the main
ingredient of the Green kernel, makes to be possible the straighforward
relation with the coherent and squeezed states of the non-compact groups
(e.g.$SU(1,1)$) and the corresponding double coverings as is the Metaplectic
the typical case \cite{meta}. In this sense, we have demonstrate here clearly,
through the comparison with references \cite{various}, the relation with the
non-hermitian time operator and, looking at reference\cite{yosi} , the
relation with the time-energy coherent states. In this manner we have been
shown specifically the form of the "overlap" integrals and the physical
operators of the "observables": velocity and the phase space structure. The
conclusion that is immediately obtained from this last point is that the
coherent states structure is related with the time dispersion of the non-local
square root Hamiltonian.

The integrability for several dimensions is achieved in the case of even
dimensions due the symplectic (phase space) underlying structure. All
remaining aspects concerning these issues were clearly treated through the
first part of the paper by mean the respective theorems, proofs and remarks.

The algebraic connection with the pseudodifferential description, described
from our previous works and formally proposed by us here, allows the correct
interpretation of the square root treated as operator that have been
exemplified by three physical cases (namely: rogue waves, warped gravities and
the Nambu-Goto action and the microcanonical propagator). The relativistic
wave equations for any spin is also described by this algebraic interpretation
establishing a bridge with the pseudodifferential and semigroup approaches.

An important new result, in the context of the algebraic approach, that we
have found before is that there exist an oscillatory fermionic effect in the
$B_{0}$ part of the supermultiplet as a Zitterwebegung, but between the chiral
and antichiral components of this Majorana bispinor. This effect is (see
equation (66) )is fully relativistic and capable of include a complete (super)
multiplet of physical states in a sharp contrast with ref.\cite{Dattoli} where
the Dirac factorization of the one dimensional relativistic Schrodinger
equation was treated introducing the so called quantum simulation of the Dirac
equation \cite{Nat}. where the vector $\underline{\Psi}$ in \cite{Dattoli} is
not an spinor and eq. (37) from \cite{Dattoli} is not the relativistic
counterpart of the Pauli equation: there are not spin degrees of freedom and
relativistic invariance.

\section{Acknowledgements}

Many thanks are given to Professors Yu. P. Stepanovsky, John Klauder and E. C.
G. Sudarshan for their interest; and particularly to Professor N. Mukunda for
several discussions in the metaplectic symmetries. I am very grateful to the
CONICET-Argentina and also to the BLTP-JINR Directorate for their hospitality
and financial support for part of this work.

\section{}

\bigskip

\bigskip

\end{document}